\documentclass[submission,copyright,creativecommons]{eptcs}
\providecommand{\event}{NCL'26} 

\usepackage{iftex}

\ifpdf
  \usepackage{underscore}         
  \usepackage[T1]{fontenc}        
\else
  \usepackage{breakurl}           
\fi

\usepackage{philex}
\usepackage{amssymb}
\usepackage{latexsym}
\usepackage{amsthm}
\usepackage{bussproofs}
\usepackage{url}
\usepackage{forest}
\usepackage{stackengine}
\stackMath

\usepackage{rotating}\newcommand{\invertediota}{\begin{sideways}\begin{sideways}$\iota$\end{sideways}\end{sideways}}

\newtheorem{theorem}{Theorem}

\usepackage{tikz}
\usepackage{tikz-qtree}

\def\titlerunning{Tableaux for Theories of Definite Descriptions}
\def\authorrunning{Nils Kürbis}
\begin{document}
\title{Tableaux Rules for Theories of Definite Descriptions in Positive Free Logic with Some Philosophical Remarks}
\author{Nils K\"urbis
\institute{Department of Logic, University of Lodz}
\email{nils.kurbis@filhist.uni.lodz.pl}
}
\date{}
\maketitle


\begin{abstract} 
\noindent While Russell's theory of definite descriptions, expressions of the form `the $F$', is almost incontrovertible in classical logic, in free logic matters stand differently. Several alternatives have been proposed. In Lambert's hierarchy, the weakest theory is his minimal free theory of definite descriptions ($\mathbf{MFD}$), which only says that anything is the $F$ iff it is the unique existing $F$, the strongest one is a chosen object theory in which all improper definite descriptions refer to the same object ($\mathbf{FD2}$). Intermediate is Lambert's $\mathbf{FD1}$, which adds $\invertediota x(x=t)=t$ as an axiom to $\mathbf{MFD}$. $\mathbf{MFD}$ is very weak. $\mathbf{FD2}$ is very strong. One might hope to find intermediate systems in which $\invertediota x(x=t)=t$ is provable from more fundamental principles. The present paper considers one avenue where one might hope to find them and closes it. Two theories, $\mathbf{FDA}$ and $\mathbf{FDH}$, will be considered that are motivated by considerations from proof-theoretic semantics. The former turns out to be equivalent to $\mathbf{FD2}$, the latter to have unreasonable consequences. A third theory, similarly motivated, is also not viable. The philosophical upshot of this is briefly discussed. 
\end{abstract}



\section{Introduction}
Definite descriptions, expressions of the form `the $F$', are ubiquitous in ordinary discourse and reasoning. Here are just three examples:\footnote{Adapted from \cite {russellondenoting} and \cite{russellknowacdesc}.}\medskip

\noindent George IV. enquired whether Scott was the author of Waverley, but he did not enquire whether Scott was Scott.\smallskip

\noindent The returning officer knows all the candidates and that the candidate with the most votes wins the election, but does not know who wins the election.\smallskip

\noindent The candidate with the most votes wins the election. The Unionist candidate wins the most votes. Sir John is the Unionist candidate. Therefore, Sir John wins the election.\medskip

\noindent Grammatically, definite descriptions are singular terms and in the same category as proper names: `The $F$' purports to refer to the unique object that answers to the description $F$, and it forms a sentence from a predicate `is $G$'. According to the most celebrated theory of definite descriptions, Russell's, however, while this may be so grammatically, from the logical and semantical perspective it is only apparent. According to Russell, `The $F$ is $G$' means `There is exactly one $F$ and it is $G$'. Thus the definite description disappears upon analysis and need neither be considered in its own right logically nor semantically. Russell concludes that definite descriptions have meaning only in the context of complete sentences, and that by themselves they are meaningless.\footnote{See \cite{russellondenoting} for the first statement of the theory, \cite[Introduction, Ch. 3, and Part I, Sec. B, $\ast 14$]{PM1} for its formal development, and \cite{russellmathematicalphilosophy} for further commentary. For a general overview of theories of definite descriptions, see \cite{kurbisdefdescr}.}

When he first presented his theory 120 years ago, Russell commented that his analysis `may seem a somewhat incredible interpretation' \cite[482]{russellondenoting}, and despite its status as a `paradigm of philosophy' \cite[1, footnote 1]{ramseyphilpapers}, Russell's conclusions have been criticised and resisted. In particular, it is felt that definite descriptions are singular terms with perfectly good meanings. 

A motivation for the development of free logic by Hintikka, Lambert and others was the formalisation of theories of definite descriptions that are alternatives to Russell's, and in which definite descriptions are genuine singular terms, grammatically, logically and semantically.\footnote{For the original papers, see \cite{hintikkaexential,hintikkatowardsdefdesc}, \cite{lambertnotesEII,lambertnotesEIII,lambertnotesEIV}, the latter conveniently summarised and updated in \cite{lambertEThe} and developed further in \cite{lambertfreedef}. For general overviews over free logic, see \cite{bencivengahandbook} and \cite{morschersimonsfreelogic}.} 

Classical logic, the logic of Frege and Russell, demands that the domain of quantification is not empty and that every singular term refers. What is often called \emph{universally free logic} permits the empty domain of quantification as well as non-referring terms. At least that was the initial position of Lambert's. In what is now the standard semantics for free logic there are two domains, an inner domain over which the quantifiers range, often said to contain the existing objects and which may be empty, and an outer domain, often said to contain the non-existing objects, required to be non-empty, and it is to objects in the latter to which `non-referring' singular terms refer.\footnote{See \cite{cocchiarellaobjects}, \cite{meyerlambert}. This way of doing free logic has proved fruitful in quantified modal logic, see \cite{garsonmodallogic}, \cite[Ch. 16]{hughescresswell}, \cite{mendelsohnfittingFOML}; the systems of Hughes and Cresswell and Fitting and Mendelsohn are not universally free in that it is required that the domain of every world contains at least one object.} I shall set this issue aside. My approach in the present paper will be largely proof-theoretical. I shall give tableaux rules for a number of theories of definite descriptions in universally free logic. 

Call a definite description `the $F$' \emph{proper} if there exists exactly one $F$ that answers to the description.\footnote{This definition is unapologetically Russellian, but it does not appear that anyone objects.} In this case `the $F$' refers to this object. The question is what to do with the improper ones. Lambert proposed three major theories which form a linear hierarchy in terms of strength. The weakest theory is the minimal free theory of definite descriptions ($\mathbf{MFD}$), which only says that anything is the $F$ iff it is the unique existing $F$, the strongest one is a chosen object theory in which all improper definite descriptions have the same referent ($\mathbf{FD2}$). Intermediate between them is Lambert's $\mathbf{FD1}$, which adds $\invertediota x(x=t)=t$ as an axiom to $\mathbf{MFD}$. 

$\mathbf{MFD}$ is very weak. $\mathbf{FD2}$ is very strong. Lambert regards $\invertediota x(x=t)=t$ as an important principle, and it has considerable intuitive appeal. One might hope to find intermediate systems, in particular ones in which $\invertediota x(x=t)=t$ is provable from more fundamental principles. The present paper considers one avenue where one might hope to find suitable alternatives and presents tableaux rules for Lambert's three major theories and two others that naturally arise. 

The two further theories, $\mathbf{FDA}$ and $\mathbf{FDH}$, are motivated by two factors: (1) to prove $\invertediota x(x=t)=t$ from more fundamental principles; (2) to redress a certain misbalance in the rules of natural deduction governing $\invertediota$ in $\mathbf{MFD}$. Both systems have a philosophical motivation in proof-theoretic semantics, as I will briefly indicate. $\mathbf{FDH}$ also at first glance looks like a minimal intervention that fixes the problem that the theory initially proposed by Hintikka was inconsistent by giving his axiom an ever so slightly different formulation. While Hintikka proposed that $t$ is the $F$ iff $t$ is $F$ and anything that is $F$ is $t$, $\mathbf{FDH}$ is based on the idea that $t$ is the $F$ iff $t$ is the unique $F$. $\mathbf{FDA}$ is an amalgamate of $\mathbf{MFD}$ and $\mathbf{FDH}$ in that it employs only one half of each of their characteristic axioms. $\mathbf{FDA}$ will turn out to be equivalent to $\mathbf{FD2}$, $\mathbf{FDH}$ too strong to be reasonable. This necessitates a correction recently made concerning these two systems \cite{kurbisiotaposfreelogic}. Yet another system is considered in the conclusion, also with a proof-theoretic motivation, which is not viable as a theory of definite descriptions either. All this has a philosophical upshot that is of some interest, which is discussed briefly in the conclusion. The avenue in which one might have hoped to find alternative theories of definite descriptions has been closed. 

Lambert and others proposed a number of further systems. Lambert reports that van Fraassen suggested that there was a linearly ordered hierarchy of theories of definite descriptions in free logic, but this proved to be wrong \cite[69, footnote $\ast$; 90]{lambertfoundations}. The situation presented by Lambert requires two dimensions of comparison \cite[90f]{lambertfoundations}: it is a spectrum rather than a hierarchy, as van Fraassen put it. The spectrum deserves more attention and a motivation of the present paper is to draw attention to it.

\section{Tableaux Rules for Positive Free Logic}
The language is a standard language of first order logic with identity, except that it contains the predicate $\exists !$ and the $\invertediota$ operator for definite descriptions. $\exists !t$ expresses `$t$ exists' or `$``t''$ refers'. $\invertediota$ forms a singular term $\invertediota xF$ from a variable $x$ and formula $F$, binding $x$. Although this is not strictly speaking necessary, I shall employ the convention, common in proof theory, to distinguish free and bound variables graphically, $x, y, z$ for the latter, $a$, $b$, $c$ for the former. Free variables are also called \emph{parameters}. The singular terms, or terms for short, are the constants, parameters and $\invertediota$ terms. I shall leave out functional terms for simplicity, but as they can be defined in terms of $\invertediota$, as do Whitehead and Russell, nothing much is lost. $t$, $t_1$, $t_2$ etc. are used for terms in general. 

The tableaux proof system is standard, $\vdash$ defined as usual. I follow the format used by Priest, who gives these rules for classical positive free logic \cite[Chs 1.4, 12.5, 13.3]{priestnonclassical}: 

\begin{center} 
\Tree
[.$A\rightarrow B$ [.$\neg A$ ] [.$B$ ] ]
\qquad
\Tree
[.$\neg (A\rightarrow B)$ [.$\stackunder{A}{\neg B}$ ] ]
\qquad\qquad
\Tree 
[.$\neg\neg A$ [.$A$ ] ]
\qquad\qquad
\Tree
[.$A\leftrightarrow B$ [.$\stackunder{\neg A}{\neg B}$ ] [.$\stackunder{A}{B}$ ] ]
\qquad
\Tree
[.$A\leftrightarrow B$ [.$\stackunder{\neg A}{B}$ ] [.$\stackunder{A}{\neg B}$ ] ]

\bigskip

\Tree
[.$\forall xA$ [.$\neg \exists !t$ ] [.$A_t^x$ ] ]
\qquad
\Tree
[.$\neg\forall xA$ [.$\stackunder{\exists !a}{\neg A_a^x}$ ] ]
\qquad\qquad
\Tree
[. $\stackunder{A_{t_1}^x}{t_1=t_2}$ [.$\stackunder{\vert}{A_{t_2}^x}$ ] ]
\qquad
\Tree
[.  $\centerdot$ [.$t=t$ ] ]
\end{center} 

\noindent where $t$ is any term and $a$ is new to the branch.\medskip

\noindent The rules for the biconditional may be derived, but treating them as primitive is useful. 

Garson gives slightly different rules for the quantifiers \cite[Ch 14.1]{garsonmodallogic}, typographically adjusted: 

\begin{center}
\begin{forest}
[$\stackunder{\forall xA}{\exists !c}$
[$A_c^x$
]]
\end{forest}\qquad
\begin{forest}
[$\neg \forall xA$
[$\stackunder{\exists !a}{\neg A_a^x}$
]]
\end{forest}
\end{center}

\noindent where $c$ is a constant and $a$ is new to the branch.\medskip
\newpage

\noindent Garson does not have parameters in his language, but clearly we can use them instead of constants, too. Garson restricts the positive $\forall$ rule due to complications with definite descriptions in modal logic: they may be non-rigid, while constants are rigid. As we are not doing modal logic, we can lift the restriction and permit any term replacing $c$. Garson's rule has a better branching factor than Priest's, and I shall use it instead.  The resulting system is equivalent to Lambert's system of positive free logic. Call it $\mathbf{CPF}$. 

A neater result would be achieved if instead we use a method introduced by Indrzejczak in the context of cut free sequent calculi for free logics \cite{andrzejcutfreefreelogic}. Permit only parameters and constants in the positive $\forall$ rule, and add the rule $(\exists !E)$: 

\begin{center}
\Tree 
[.$\exists !\invertediota xF$ [.$a=\invertediota xF$ ]]
\end{center}

\noindent where $a$ is new to the branch.\medskip

\noindent This rule is also used in Indrzejczak's and Zawidzki's tableaux systems for free logics \cite{indrzejczakzawidzkitableauDD}. With Priest's rules, there is no hope for the subformula property, with $(\exists !E)$ there is at least a version of it if $\exists !$ and $=$ are discounted. Priest's positive $\forall$ rule is then derivable, as is the version of Garson's rule which permits $\invertediota$ terms for the instantiation.\footnote{I'm indebted to Andrzej Indrzejczak for the suggestion in this paragraph. Indrzejczak observed that the sequent calculus version of $(\exists !E)$ ensures that a problem that occurs with the replacement of parameters by $\invertediota$ terms in proofs of cut elimination (admissibility) does not arise: in a nutshell, when parameters are replaced by $\invertediota$ terms, they may introduce cut formulas of arbitrary degree when a quantified cut formula is removed. A similar problem arises in proofs of normalisation, and Indrzejczak's trick solves the problem here, too \cite{kurbisiotanegfreelogic,kurbisiotaposfreelogic}.}

\section{Lambert's Theories of Definite Descriptions}
In his earliest papers on the subject Lambert proposed three free theories of definite descriptions of increasing strength. The following sections give tableaux rules for them. 

\subsection{The Minimal Free Theory of Definite Descriptions}
Lambert's \emph{minimal free description} theory $\mathbf{MFD}$ is formalised by adding what has come to be called \emph{Lambert's Law} as an axiom to $\mathbf{CPF}$: 

\lbp{LA}{$LA$}{$\forall z(\invertediota xA=z\leftrightarrow \forall y(A_y^x\leftrightarrow y=z))$}

\noindent $\mathbf{MFD}$ has proved to be quite popular. It is the theory that underlies Garson's treatment of definite descriptions \cite[Ch. 18]{garsonmodallogic}. Garson does not formulate separate tableaux rules for $\invertediota$, but simply permits the addition of \rf{LA} to a tableau at any point \cite[400]{garsonmodallogic}. Priest does not consider the formalisation of definite descriptions in \cite{priestnonclassical}. It is not difficult to devise tableaux rules for $\invertediota$ that do not merely permit to add \rf{LA}:\footnote{The rules given here differ slightly from those given by Indrzejcak and Zawidzki \cite{indrzejczakzawidzkitableauDD}, who aim for a unified treatment of $\mathbf{MFD}$ and its variations in other kinds of free logic. The theories of definite descriptions in \cite[Ch. 17]{hughescresswell} and \cite[Ch. 20]{mendelsohnfittingFOML} require scope markers, which the theories of Lambert et al. tried to avoid. They are essentially Russell's theory transposed to modal logic. The former are explicit about this and follow Smullyan's treatment of definite descriptions in modal logic \cite{smullyanmoddesc}, who used Russell's notation for marking scope. The latter use the $\lambda$ operator for predicate abstraction; see \cite{recentdevelopmentsDD} for why they nonetheless remain Russellian. Tableaux rules for modal logic with an $\invertediota$ operator that interacts with a $\lambda$ operator  are also given in \cite{indrejcakzawidzkiDDhybrid}.}

\begin{center}
\begin{forest}
[$\stackunder{$\stackunder{\exists !t_1}{\exists !t_2}$}{\invertediota x A=t_1}$
[$\stackunder{A_{t_2}^x}{t_2=t_1}$]
[$\stackunder{\neg A_{t_2}^x}{\neg t_2=t_1}$]
]
\end{forest}\qquad\qquad
\begin{forest}
[$\stackunder{\exists !t_1}{\neg \ \invertediota xA=t_1}$
[$\exists !b$
[$\stackunder{A_b^x}{\neg b=t_1}$]
[$\stackunder{\neg A_b^x}{b=t_1}$]
]]
\end{forest}
\end{center}

\noindent where $b$ is new to the branch.\medskip 

\noindent These rules are remarkably simple, given that \rf{LA}, with its two biconditionals and two quantifiers, is quite complex. 

Call the systems that result from $\mathbf{CPF}$ by adding \rf{LA} as an axiom $\mathbf{MFD}^A$ and by adding the tableaux rules for $\invertediota$ above $\mathbf{MFD}^T$. 

\begin{theorem}\label{MFDEquiv}
$\Gamma\vdash A$ in $\mathbf{MFD}^A$ iff $\Gamma\vdash A$ in $\mathbf{MFD}^T$. 
\end{theorem}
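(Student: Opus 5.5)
Since both systems extend $\mathbf{CPF}$ and share its closure conditions, it suffices to show two things: \rf{LA} is provable in $\mathbf{MFD}^T$, and each of the two $\invertediota$ rules of $\mathbf{MFD}^T$ is derivable in $\mathbf{MFD}^A$, that is, \rf{LA} together with the $\mathbf{CPF}$ rules licenses the same branch extensions. Then a closed tableau in either system converts into one in the other. In one direction we replace each application of an $\invertediota$ rule by a subtableau that starts from an added instance of \rf{LA}. In the other direction we replace each addition of \rf{LA} by its closed $\mathbf{MFD}^T$ proof, grafted in via the negated formula. I would first fix the format of derived rules: a rule is derivable if, from its premisses on a branch, one can grow a tableau whose open branches carry exactly (supersets of) the rule's conclusion sets.

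For $\mathbf{MFD}^A\Rightarrow\mathbf{MFD}^T$, I start a tableau with $\neg\forall z(\invertediota xA=z\leftrightarrow\forall y(A_y^x\leftrightarrow y=z))$. Applying negated $\forall$ gives $\exists!a$ and the negated biconditional, which splits into two branches.

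On the branch with $\invertediota xA=a$ and $\neg\forall y(\ldots)$, negated $\forall$ gives $\exists!b$ and $\neg(A_b^x\leftrightarrow b=a)$. The positive $\invertediota$ rule with $t_1=a$, $t_2=b$ then yields the two cases $A_b^x, b=a$ and $\neg A_b^x,\neg b=a$. Each contradicts the negated biconditional after splitting it.

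On the branch with $\neg\,\invertediota xA=a$ and $\forall y(A_y^x\leftrightarrow y=a)$, the negative $\invertediota$ rule gives $\exists!b$ with the two cases $A_b^x,\neg b=a$ and $\neg A_b^x, b=a$. Instantiating the universal at $b$ via Garson's rule, which is licensed by $\exists!b$, and splitting the biconditional closes everything.

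For $\mathbf{MFD}^T\Rightarrow\mathbf{MFD}^A$, I derive the positive rule as follows. Given $\exists!t_1,\exists!t_2,\invertediota xA=t_1$, add \rf{LA} and instantiate $z$ at $t_1$ using $\exists!t_1$. Branching on the biconditional, the branch with $\neg\,\invertediota xA=t_1$ closes, leaving $\forall y(A_y^x\leftrightarrow y=t_1)$. Instantiating at $t_2$ and applying the first biconditional rule gives exactly the two branches $A_{t_2}^x,t_2=t_1$ and $\neg A_{t_2}^x,\neg t_2=t_1$.

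For the negative rule, instantiating \rf{LA} at $t_1$ and splitting leaves the open branch $\neg\forall y(A_y^x\leftrightarrow y=t_1)$. Negated $\forall$ gives a new $b$ with $\exists!b$ and $\neg(A_b^x\leftrightarrow b=t_1)$, and the negated-biconditional rule produces exactly the required two branches.

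The main obstacle is not these computations but the bookkeeping. First, the instantiation term may be an $\invertediota$ term, and Garson's rule as liberalised in the text permits any term, so this is fine; if the $(\exists!E)$ variant is used, I would first pass to a parameter $a=t$ and use the identity rule. Second, I must check that freshness conditions are respected when a derived rule is inlined, by renaming parameters if needed. Third, I must justify that derivable rules may be used freely, which is the standard substitution of subtableaux, closing the argument by induction on the number of rule applications.
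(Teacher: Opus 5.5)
Your proposal is correct and follows the paper's proof essentially step by step. It uses the same closed $\mathbf{MFD}^T$ tableau for the negation of $LA$, with two small differences: you apply the positive $\invertediota$ rule before rather than after splitting the negated biconditional, and you skip the paper's unneeded instantiation of the universal at $a$ on the right branch. It also uses the same two derivations of the $\invertediota$ rules in $\mathbf{MFD}^A$ from $LA$ instantiated at $t_1$.

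One point you leave tacit, as the paper does: replacing an added instance of $LA$ by its closed $\mathbf{MFD}^T$ tableau requires branching on $LA$ versus $\neg LA$, i.e.\ a cut, which is not among the primitive tableau rules. That direction therefore ultimately rests on cut admissibility for $\mathbf{MFD}^T$ (e.g.\ via completeness).
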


\begin{proof}
To illustrate the workings of the system, we prove this theorem explicitly.  

(a) \rf{LA} is provable in $\mathbf{MFD}^T$ using the tableaux rules for $\invertediota$:

\begin{center}
\begin{forest}
[${\neg \forall z(\invertediota xA=z\leftrightarrow \forall y(A_y^x\leftrightarrow y=z))}$
[$\stackunder{\exists !a}{\neg (\invertediota xA=a\leftrightarrow \forall y(A_y^x\leftrightarrow y=a))}$
[$\stackunder{\invertediota xA=a}{\neg\forall y(A_y^x\leftrightarrow y=a)}$
[$\stackunder{\exists !b}{\neg(A_b^x\leftrightarrow b=a)}$
[$\stackunder{A_b^x}{\neg \ b=a}$
[$\stackunder{A_b^x}{b=a}$]
[$\stackunder{\neg A_b^x}{\neg \ b=a}$]]
[$\stackunder{\neg A_b^x}{b=a}$
[$\stackunder{A_b^x}{b=a}$]
[$\stackunder{\neg A_b^x}{\neg \ b=a}$]]
]]
[$\stackunder{\neg \ \invertediota xA=a}{\forall y(A_y^x\leftrightarrow y=a)}$
[${A_y^a\leftrightarrow a=a}$
[$\stackunder{A_a^x}{a=a}$
[$\exists !b$
[$\stackunder{A_b^x}{\neg \ b=a}$
[${A_b^x\leftrightarrow b=a}$
[$\stackunder{A_b^x}{b=a}$]
[$\stackunder{\neg A_b^x}{\neg \ b=a}$]]]
[$\stackunder{\neg A_b^x}{b=a}$
[${A_b^x\leftrightarrow b=a}$
[$\stackunder{A_b^x}{b=a}$]
[$\stackunder{\neg A_b^x}{\neg \ b=a}$]]]]]
[$\stackunder{\neg A_a^x}{\neg \ a=a}$
[{$a=a$}]]]]]]]]]]
\end{forest}
\end{center}

\noindent The four leftmost leaves are added in accordance with the positive $\invertediota$ rule and close. The rightmost branch closes because of the law of identity. In the branches between these, $\exists !b$ and the branches that come off it are added by the negative $\invertediota$ rule, and they close by applying rules for $\forall$ and $\leftrightarrow$. 

(b) The tableaux rules for $\invertediota$ are derivable in $\mathbf{MFD}^A$. 

\begin{center}
\begin{forest}
[$\stackunder{$\stackunder{\exists !t_1}{\exists !t_2}$}{\invertediota x A=t_1}$
[${\forall z(\invertediota xA=z\leftrightarrow \forall y(A_y^x\leftrightarrow y=z))}$
[${\invertediota xA=t_1\leftrightarrow \forall y(A_y^x\leftrightarrow y=t_1)}$
[$\stackunder{\invertediota xA=t_1}{\forall y(A_y^x\leftrightarrow y=t_1)}$
[${A_{t_2}^x\leftrightarrow t_2=t_1}$
[$\stackunder{A_{t_2}^x}{t_2=t_1}$]
[$\stackunder{\neg A_{t_2}^x}{\neg t_2=t_1}$]
]]
[$\stackunder{\neg \ \invertediota xA=t_1}{\neg\forall y(A_y^x\leftrightarrow y=t_1)}$]
]]]
]]]
\end{forest}\qquad
 \begin{forest}
[$\stackunder{\exists !t_1}{\neg \ \invertediota xA=t_1}$
[${\forall z(\invertediota xA=z\leftrightarrow \forall y(A_y^x\leftrightarrow y=z))}$
[${\invertediota xA=t_1\leftrightarrow \forall y(A_y^x\leftrightarrow y=t_1)}$
[$\stackunder{\invertediota xA=t_1}{\forall y(A_y^x\leftrightarrow y=t_1)}$]
[$\stackunder{\neg \ \invertediota xA=t_1}{\neg \forall y(A_y^x\leftrightarrow y=t_1)}$
[$\stackunder{\exists !b}{\neg (A_b^x\leftrightarrow b=t_1)}$
[$\stackunder{A_b^x}{\neg \ b=t_1}$]
[$\stackunder{\neg A_b^x}{b=t_1}$]
]]]]]]
\end{forest}
\end{center}

\noindent On the right, the rightmost branch closes and the other two constitute the derivation of the positive $\invertediota$ rule. On the left, the leftmost branch closes, and the other two constitute the derivation of the negative $\invertediota$ rule.\end{proof}

\subsection{FD1}
According to Lambert, $\mathbf{MFD}$ is too minimal as `an important principle evidently not deducible from [\rf{LA}] is the principle of \emph{Cancellation}' \cite[30]{lambertEThe}: 

\lbp{C}{$CA$}{$\invertediota x (x=t) = t$}

\noindent Lambert's system $\mathbf{FD1}$ results from adding \rf{C} to $\mathbf{MFD}$ (\cite[44]{lambertfreedef}, \cite[90f]{lambertfoundations}). Unfortunately there appears to be no other option but to add \rf{C} as an axiom to $\mathbf{MFD}$ also in its tableaux formulation, or alternatively to lay down that any branch containing $\neg \ \invertediota x (x=t) = t$ closes, neither of which are particularly exciting. 

It would be preferable to derive \rf{C} from some more fundamental principles captured in tableaux rules. This is an important motivation for investigating alternative systems. We'll encounter three failed attempts in the following. 

Call the systems that result from $\mathbf{CPF}$ by adding \rf{LA} and \rf{C} as axioms $\mathbf{FD1}^A$ and by adding the corresponding tableaux rules for $\invertediota$ $\mathbf{FD1}^T$. 

\begin{theorem}
$\Gamma\vdash A$ in $\mathbf{FD1}^A$ iff $\Gamma\vdash A$ in $\mathbf{FD1}^T$. 
\end{theorem}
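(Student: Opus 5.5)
This theorem follows the same pattern as Theorem \ref{MFDEquiv}, and I would reduce it to that result. $\mathbf{FD1}^A$ is $\mathbf{MFD}^A$ plus the axiom \rf{C}. $\mathbf{FD1}^T$ is $\mathbf{MFD}^T$ plus the corresponding tableaux device for \rf{C}. That device is either the permission to add $\invertediota x(x=t)=t$ to any branch for any term $t$, or the closure rule under which any branch containing $\neg\,\invertediota x(x=t)=t$ closes. So it suffices to show two things: that the $\invertediota$ rules and \rf{LA} are interderivable over $\mathbf{CPF}$, and that the two formulations of \rf{C} are interderivable over $\mathbf{CPF}$.

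For the direction from $\mathbf{FD1}^T$ to $\mathbf{FD1}^A$, I would first invoke Theorem \ref{MFDEquiv}, part (a). The tableau displayed there closes $\neg\forall z(\invertediota xA=z\leftrightarrow \forall y(A_y^x\leftrightarrow y=z))$ using only the $\mathbf{MFD}^T$ rules, so \rf{LA} is provable in $\mathbf{FD1}^T$. Next I would show that \rf{C} is provable. Start a tableau with $\neg\,\invertediota x(x=t)=t$. If the rule is the addition rule, add $\invertediota x(x=t)=t$ and the branch closes. If it is the closure rule, the branch closes immediately. So every theorem of $\mathbf{FD1}^A$ is a theorem of $\mathbf{FD1}^T$. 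Since tableaux derivability from $\Gamma$ is closed under using provable formulas as extra premises (put them at the root), this extends to $\Gamma\vdash A$.

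For the converse, I would use part (b) of Theorem \ref{MFDEquiv}: the positive and negative $\invertediota$ rules are derivable rules in $\mathbf{MFD}^A$, obtained by adding \rf{LA} and unfolding it with the $\forall$ and $\leftrightarrow$ rules. The \rf{C} rule is derivable in $\mathbf{FD1}^A$ as well. Under the addition formulation it is literally an instance of the axiom. Under the closure formulation, a branch containing $\neg\,\invertediota x(x=t)=t$ closes once we add the axiom instance $\invertediota x(x=t)=t$. Hence any closed $\mathbf{FD1}^T$ tableau can be converted, application by application, into a closed $\mathbf{FD1}^A$ tableau.

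The only point needing care is the level of generality in the term $t$. \rf{C} is stated for arbitrary terms, including $\invertediota$ terms and parameters, whereas \rf{LA} is a universally quantified formula. I would make sure the axiom \rf{C} is read as a schema over all terms, matching the tableaux rule, so that no instantiation via $\exists!$ is required. If \rf{C} were read with only parameters, Indrzejczak's rule $(\exists!E)$ would not help directly, because improper descriptions need not exist. Everything else is routine bookkeeping inherited from Theorem \ref{MFDEquiv}.
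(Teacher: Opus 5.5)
Your proposal is correct and is essentially the paper's argument. The paper's proof is just ``Trivial'', which is exactly your reduction to parts (a) and (b) of Theorem~\ref{MFDEquiv} plus the observation that \rf{C} enters $\mathbf{FD1}^A$ and $\mathbf{FD1}^T$ in the same way. The one step you assert without proof, that tableau provability is closed under adding provable formulas as premises (a cut-type property), is the same tacit assumption the paper makes in Theorem~\ref{MFDEquiv}, so your argument meets the paper's own standard.
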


\begin{proof}Trivial.\end{proof}

\subsection{FD2}
The strongest of Lambert's systems, $\mathbf{FD2}$, is formalised by adding the following axiom to $\mathbf{CPF}$ (\cite[87]{lambertnotesEIV}, \cite[237]{fraassenlambert}):\footnote{\label{FD2equiv}Lambert uses the version $\invertediota xA=t\leftrightarrow \forall x (x=t\leftrightarrow (A\land \forall y(A_y^x\rightarrow y=x)))$. They are equivalent because in positive free logic $\exists !a\vdash (A_a^x\land \forall y(A_y^x\rightarrow y=a))\leftrightarrow\forall y(A_y^x\leftrightarrow y=a)$.}

\lbp{FD2}{$FD2$}{$\invertediota xA=t\leftrightarrow \forall x (x=t\leftrightarrow \forall y(A_y^x\leftrightarrow y=x))$}

\noindent \rf{FD2} implies both \rf{LA} and \rf{C}, but is not implied by them. A consequence of \rf{FD2} is 

\lbp{CO}{$CO$}{$(\neg \exists ! t_1\land \neg \exists !t_2)\rightarrow t_1=t_2$}

\noindent Speaking semantically, in \rf{FD2} there is one object in the outer domain to which all terms refer that do not refer in the inner domain. This is Carnap's \emph{chosen object theory} of improper definite descriptions, a version of which was also endorsed by Frege in some of his writings.\footnote{See \cite[33, 41]{fregesinnbedeutung} (translated in \cite{geachblackfrege}), \cite[\S 7, \S 8]{carnapmeaningnecessity}, and also \cite{pelletierlinsky}.}

$\mathbf{FD2}$ can be axiomatised equivalently by adding \rf{CO} to $\mathbf{MFD}$ \cite{fraassenlambert}. It is trivial to turn the axiom into a tableaux rule: 

\begin{center}
\begin{forest}
[$\stackunder{\neg \exists !t_1}{\neg \exists !t_2}$
[${t_1=t_2}$]]
\end{forest}
\end{center}

\noindent Call the system that results from $\mathbf{CPF}$ by adding \rf{LA} and \rf{CO} as axioms $\mathbf{FD2}^A$ and call $\mathbf{FD2}^T$ the result of adding the corresponding tableaux rules for $\invertediota$. 

\begin{theorem}
$\Gamma\vdash A$ in $\mathbf{FD2}^A$ iff $\Gamma\vdash A$ in $\mathbf{FD2}^T$. 
\end{theorem}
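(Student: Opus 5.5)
The equivalence splits into the \rf{LA} part and the \rf{CO} part. $\mathbf{FD2}^A$ is $\mathbf{CPF}$ plus the axioms \rf{LA} and \rf{CO}. $\mathbf{FD2}^T$ is $\mathbf{CPF}$ plus the positive and negative $\invertediota$ rules of $\mathbf{MFD}^T$ and the \rf{CO} rule (from $\neg\exists !t_1$, $\neg\exists !t_2$ add $t_1=t_2$). Both systems share $\mathbf{CPF}$, and derivations in either are closed tableaux. So it is enough to show two things: each axiom of $\mathbf{FD2}^A$ is provable in $\mathbf{FD2}^T$, and each extra rule of $\mathbf{FD2}^T$ is derivable in $\mathbf{FD2}^A$. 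Then any closed tableau in one system can be transformed step by step into a closed tableau in the other.

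For the \rf{LA} component I would invoke Theorem~\ref{MFDEquiv} directly. Its proof, part (a), gives a closed tableau for \rf{LA} that uses only the $\mathbf{CPF}$ rules and the $\invertediota$ rules, so it is also a tableau of $\mathbf{FD2}^T$. Its part (b) derives the two $\invertediota$ rules in $\mathbf{CPF}$ plus \rf{LA}, so the derivations also go through in $\mathbf{FD2}^A$. Adding further rules or axioms does not invalidate these closed tableaux, so nothing new is needed here.

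For the \rf{CO} component the two directions are as follows.

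\begin{itemize}
\item \textbf{\rf{CO} is provable in $\mathbf{FD2}^T$.} Start from $\neg((\neg\exists !t_1\land\neg\exists !t_2)\rightarrow t_1=t_2)$. Unfold it to $\neg\exists !t_1$, $\neg\exists !t_2$ and $\neg\, t_1=t_2$, treating $\land$ via the usual rules or via its definition in terms of $\rightarrow$ and $\neg$. Apply the \rf{CO} rule to add $t_1=t_2$, and the branch closes.
\item \textbf{The \rf{CO} rule is derivable in $\mathbf{FD2}^A$.} On a branch containing $\neg\exists !t_1$ and $\neg\exists !t_2$, add the axiom \rf{CO} and apply the $\rightarrow$ rule. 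The $B$-branch yields $t_1=t_2$, which is exactly the conclusion of the rule. The $\neg A$-branch yields $\neg(\neg\exists !t_1\land\neg\exists !t_2)$, which splits into $\neg\neg\exists !t_1$ or $\neg\neg\exists !t_2$. These become $\exists !t_1$ or $\exists !t_2$, and both close against the premisses.
\end{itemize}

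There is no real obstacle. The only care needed is bookkeeping: the language as presented has $\rightarrow$, $\neg$, $\leftrightarrow$ and $\forall$ as primitive, so the conjunction in \rf{CO} has to be handled by its definition (or by the standard rules for $\land$). Beyond that, we only need to check that the closed tableaux from Theorem~\ref{MFDEquiv} carry over to the extended systems, which they do because tableau provability is monotone under adding rules and axioms.
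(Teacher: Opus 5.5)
Your proposal is correct and follows the paper's proof. The paper likewise shows only that \rf{CO} is provable in $\mathbf{FD2}^T$ and that the $\mathbf{FD2}^T$ rules are derivable in $\mathbf{FD2}^A$, with the \rf{LA} part inherited tacitly from Theorem~\ref{MFDEquiv}; you spell out the two short tableaux for \rf{CO} and the monotonicity appeal that the paper leaves implicit.
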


\begin{proof}
(a) \rf{CO} is provable in $\mathbf{FD2}^T$ using the tableaux rules for $\invertediota$. (b) The tableaux rules for $\invertediota$ of $\mathbf{FD2}^T$ are derivable in $\mathbf{FD2}^A$. 
\end{proof}

\subsection{Lambert's Hierarchy}
The three systems of Lambert's form a hierarchy with $\mathbf{MFD}$ as the weakest, $\mathbf{FD2}$ as the strongest, and $\mathbf{FD1}$ inbetween, where the inclusions are proper: 

\begin{theorem}\label{lambertinclusions}
$\mathbf{MFD}\subset\mathbf{FD1}\subset\mathbf{FD2}$. 
\end{theorem}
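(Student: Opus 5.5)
The theorem asserts four things: two inclusions and two failures of the converse. I would treat each separately.

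The inclusion $\mathbf{MFD}\subseteq\mathbf{FD1}$ is immediate, since $\mathbf{FD1}$ is $\mathbf{MFD}$ plus \rf{C}. For $\mathbf{FD1}\subseteq\mathbf{FD2}$, I would use the axiomatisation of $\mathbf{FD2}$ as $\mathbf{MFD}$ plus \rf{CO}, which already contains \rf{LA}, and derive \rf{C} in it, conveniently as a closed tableau in $\mathbf{FD2}^T$. The derivation splits on whether $\exists !t$ holds.
\begin{itemize}
\item[] If $\exists !t$, apply the negative $\invertediota$ rule to $\neg\,\invertediota x(x=t)=t$. This yields a fresh $b$ with $\exists !b$ and either $b=t\land\neg b=t$ or $\neg b=t\land b=t$. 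Both branches close.
\item[] If $\neg\exists !t$, I also need $\neg\exists !\invertediota x(x=t)$; then \rf{CO} gives $\invertediota x(x=t)=t$. To get it, suppose $\exists !\invertediota x(x=t)$. Applying the positive $\invertediota$ rule with $t_1=t_2=\invertediota x(x=t)$ and the identity $\invertediota x(x=t)=\invertediota x(x=t)$ gives branches $\invertediota x(x=t)=t$ (closing against the negated \rf{C}) and $\neg\,\invertediota x(x=t)=\invertediota x(x=t)$ (closing by the law of identity).
\end{itemize}
So \rf{C} is provable in $\mathbf{FD2}$.

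The strictness claims need countermodels in the standard outer/inner domain semantics for positive free logic, assuming its soundness. Here $D_o\supseteq D_i$, $D_i$ may be empty, and $\exists !$ denotes membership in $D_i$. I would rely on the fact that \rf{LA} only constrains $\invertediota xA$ when its value lies in $D_i$: it forces $\invertediota xA$ to denote the unique $F$ in $D_i$ if there is one, and otherwise to denote something outside $D_i$.

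For $\mathbf{MFD}\not\supseteq\mathbf{FD1}$, take $D_i=\emptyset$ and $D_o=\{o_1,o_2\}$. Let the constant $c$ denote $o_1$ and let every $\invertediota$ term denote $o_2$. \rf{LA} holds vacuously because the inner domain is empty, but $\invertediota x(x=c)=c$ fails. For $\mathbf{FD1}\not\supseteq\mathbf{FD2}$, take $D_i=\emptyset$ and $D_o=\{o_1,o_2\}$, with $c_1$ denoting $o_1$ and $c_2$ denoting $o_2$. \rf{CO} then fails for $c_1,c_2$, so it remains only to satisfy \rf{C}.

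That is the main obstacle: \rf{C} must hold for every term $t$, including nested descriptions, so interpreting $\invertediota$ must be compatible with \rf{C} and \rf{LA} simultaneously. My first attempt is to let $\invertediota xA$ denote the value of $t$ whenever $A$ is satisfied in $D_o$ exactly by the value of $t$ (the $x=t$ case), and to fix an arbitrary value otherwise. This makes \rf{C} true. \rf{LA} remains vacuous because $D_i=\emptyset$, and in this setting \rf{LA} imposes nothing beyond the requirement that improper descriptions land outside $D_i$. I would check that the interpretation is well-defined by defining denotation by induction on term complexity, with extensional identity.

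If a syntactic route is preferred for the second separation, I would appeal to the known result of van Fraassen and Lambert \cite{fraassenlambert}.
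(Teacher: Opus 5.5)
Your proof is correct, but it takes a different route from the paper. The paper gives no argument of its own and simply defers to van Fraassen and Lambert \cite{fraassenlambert}. You make the theorem self-contained: a syntactic derivation of \rf{C} from \rf{LA} and \rf{CO} for the non-trivial inclusion, and two countermodels with empty inner domain for strictness.

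The countermodels work as you say:
\begin{itemize}
\item With $D_i=\emptyset$ every instance of \rf{LA} is vacuously true.
\item The constant interpretation of $\invertediota$ refutes \rf{C} at $c$.
\item The ``unique satisfier in $D_o$, else a default'' interpretation makes $\invertediota x(x=t)$ denote the value of $t$, because $=$ is real identity on $D_o$. So \rf{C} holds for every term while \rf{CO} fails at $c_1,c_2$.
\end{itemize}
Both interpretations are legitimate because the denotation of $\invertediota xA$ depends only on the extension of $A$ over $D_o$. That gives the substitution lemma needed for soundness of Leibniz' Law and of the $\forall$-rule instantiated with $\invertediota$-terms.

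Your route costs something the paper deliberately sets aside: the inner/outer domain semantics and its soundness for $\mathbf{CPF}$ with $\invertediota$-terms. In return it explains transparently why the inclusions are proper. \rf{LA} says nothing when nothing exists. \rf{C} constrains only descriptions of the form $\invertediota x(x=t)$, so it leaves open the identification of non-existents that \rf{CO} imposes.

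One repair is needed in the inclusion step: you cannot literally close this tableau in $\mathbf{FD2}^T$. No rule there branches on $\exists!s$ versus $\neg\exists!s$. From the root $\neg\,\invertediota x(x=t)=t$, no rule as stated produces any $\exists!$- or $\neg\exists!$-literal. Hence neither the $\invertediota$-rules nor the \rf{CO}-rule ever becomes applicable.

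Run the tableau in $\mathbf{FD2}^A$ instead, adding the instance $(\neg\exists!t\land\neg\exists!\invertediota x(x=t))\rightarrow t=\invertediota x(x=t)$ of \rf{CO}.
\begin{itemize}
\item The branch with the consequent closes: Leibniz' Law applied to $t=t$ and $t=\invertediota x(x=t)$ gives $\invertediota x(x=t)=t$.
\item The branch with the negated antecedent splits into $\exists!t$ and $\exists!\invertediota x(x=t)$. These are exactly your two closing cases. The $\invertediota$-rules are available in $\mathbf{FD2}^A$ by part (b) of the proof of Theorem~\ref{MFDEquiv}.
\end{itemize}
Your second case never uses $\neg\exists!t$, so nothing else in your argument changes.
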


\begin{proof}
See \cite{fraassenlambert}.\end{proof}

\noindent Lambert also considers a number of other theories of definite descriptions in free logic, proposed by himself, van Fraassen and others (see \cite[90f]{lambertfoundations} for a survey of the options that are on the table). The result is not a linear hierarchy, but comparison requires two dimension. This is not the place to discuss them all. Instead, I shall propose two further systems that may be motivated by considerations in proof-theoretic semantics, a theory not considered by Lambert.

\section{Two Systems Motivated by Proof-Theoretic Semantics}
\subsection{Proof-Theoretic Semantics}
According to proof-theoretic semantics, the meaning of a logical expression may be defined by the rules of inference governing it if they satisfy certain requirements: they must be \emph{in harmony}.\footnote{See \cite[Chs 10-13]{dummettLBM} and \cite{schroederheistersemantics}. The idea ultimately stems from a fertile remark of Gentzen's \cite[\S5.13, 189]{gentzenuntersuchungen}. For a brief overview and the present author's take on proof-theoretic semantics, see \cite{kurbisPTSNM} and \cite[Ch. 2]{kurbisproofandfalsity}.} Proof-theoretic semantics is first and foremost a theory of meaning for certain sentence-forming expressions, the sentential connectives and quantifiers. But it has been applied to term-forming operators, too, most notably by Tennant \cite{tennantabstraction}. The present considerations contributes to this project, although it must be said that they are not encouraging.\footnote{For more thoughts on these matters, see \cite{kurbisiotanegfreelogic,kurbisiotaposfreelogic}. Some of the claims in the latter are, however, confused or, in the light of the results to be reported, false, especially those in section 8.3 and 8.4. There is no space for further details.}

The notion of harmony is suggestive but difficult to spell out. For present purposes the following suffices: harmony consists in a balance between introduction and elimination rules of systems of natural deduction, where the introduction rules specify the grounds for the derivation of a formula with the expression in dominant position and the elimination rules their consequences. These grounds and consequences must be balanced so that all and only those consequences can be drawn that are justified by the grounds. Formally harmony is tied to the provability of a normalisation theorem\footnote{Via Prawitz's inversion principle \cite[33f]{prawitznaturaldeduction}.}, which is a necessary, but not a sufficient, condition for harmony: normalisation shows that only those consequences that are justified by the grounds may be drawn, but not that all of them can be. Further details need not concern us here. What has been said suffices to establish that the rules for $\invertediota$ in $\mathbf{MFD}$ are not harmonious. 

In natural deduction, the $\invertediota$ operator of $\mathbf{MFD}$ is governed by the following rules:\footnote{They modify Tennant's rules \cite{tennantabstraction,tennantnatural} for negative free logic to suit $\mathbf{CPF}$ and are equivalent to those given by Garson \cite[Ch. 18]{garsonmodallogic}, but their form here is preferable for proof-theoretic investigation.}

\begin{prooftree} 
\AxiomC{$\exists ! t$}
\AxiomC{$[a=t]^i \ [\exists ! a]^k$}
\noLine
\UnaryInfC{$\Xi$}
\noLine
\UnaryInfC{$F_a^x$}
\AxiomC{$[F_a^x]^j \ [\exists !a]^k$}
\noLine
\UnaryInfC{$\Pi$}
\noLine
\UnaryInfC{$a=t$}
\LeftLabel{$(\invertediota I)$ \ }
\RightLabel{$_{i, j, k}$}
\TrinaryInfC{$\invertediota xF=t$}
\end{prooftree}

\noindent where $a$ is not free in $F$ nor in $t$ and does not occur in any undischarged assumptions in $\Xi$ and $\Pi$ except those in the assumption classes displayed.

\begin{center} 
\AxiomC{$\invertediota xF=t$}
\AxiomC{$u=t$}
\AxiomC{$\exists !u$}
\LeftLabel{$(\invertediota E_1)$ \ } 
\TrinaryInfC{$F_u^x$}
\DisplayProof
\qquad
\AxiomC{$\invertediota xF=t$}
\AxiomC{$F_u^x$}
\AxiomC{$\exists ! t$}
\AxiomC{$\exists !u$}
\LeftLabel{$(\invertediota E_2)$ \ }
\QuaternaryInfC{$u=t$}
\DisplayProof
\end{center} 

\noindent The elimination rules are not in harmony with $(\invertediota I)$, because the latter requires a premise $\exists !t$ for the deduction of $\invertediota xF=t$, which cannot be retrieved from it by an elimination rule. Contrast with the rules for $\invertediota$ in negative free logic, in which precisely that inference is licensed by a third elimination rule $\invertediota xF=t\vdash \exists !t$. This rule cannot be added to positive free logic: As $\vdash \invertediota xA=\invertediota xA$, $\vdash \exists !\invertediota xA$, and so by $(\invertediota E_1)$ $A(\invertediota xA)$, which leads to inconsistency: replace $A$ by a contraction. 

\subsection{FDA}
To redress the misbalance in the rules governing $\invertediota$ in $\mathbf{MFD}$ and to restore harmony the option suggests itself to weaken the introduction rule instead of strengthening the elimination rules and to drop the problematic premise: 

\begin{prooftree} 
\AxiomC{$[a=t]^i \ [\exists ! a]^k$}
\noLine
\UnaryInfC{$\Xi$}
\noLine
\UnaryInfC{$F_a^x$}
\AxiomC{$[F_a^x]^j \ [\exists !a]^k$}
\noLine
\UnaryInfC{$\Pi$}
\noLine
\UnaryInfC{$a=t$}
\LeftLabel{$(\invertediota I')$ \ }
\RightLabel{$_{i, j, k}$}
\BinaryInfC{$\invertediota xF=t$}
\end{prooftree}

\noindent where $a$ is not free in $F$ nor in $t$ and does not occur in any undischarged assumptions in $\Xi$ and $\Pi$ except those in the assumption classes displayed.\medskip

\noindent Added to $\mathbf{CPF}$ in natural deduction, $(\invertediota I')$, $(\invertediota E_1)$ and $(\invertediota E_2)$ formalise the theory of definite descriptions called $\mathbf{FDA}$ in \cite{kurbisiotaposfreelogic}. It can be axiomatised by one half of \rf{LA} and an axiom formulated for all terms rather than by quantification: 

\lbp{HARL}{$HA^{rl}$}{$\forall y(A_y^x\leftrightarrow y=t)\rightarrow\invertediota xA=t$}

\lbp{LALR}{$LA^{lr}$}{$\forall z (\invertediota xA=z\rightarrow \forall y(A_y^x\leftrightarrow z=y))$}

\noindent \rf{HARL} is one half of a hypothetical axiom to be considered later, which explains its label. 
\rf{C} follows from the instance with $x=t$ for $A$ and principles of free logic, so the desideratum of deriving it from more fundamental principles is fulfilled. 

Corresponding tableaux rules are: 

\begin{center}
\begin{forest}
[$\stackunder{$\stackunder{\exists !t_1}{\exists !t_2}$}{\invertediota x A=t_1}$
[$\stackunder{A_{t_2}^x}{t_1=t_2}$]
[$\stackunder{\neg A_{t_2}^x}{\neg t_1=t_2}$]
]
\end{forest}
\qquad\qquad
\begin{forest}
[${\neg \ \invertediota xA=t_1}$
[$\exists !b$
[$\stackunder{A_b^x}{\neg b=t_1}$]
[$\stackunder{\neg A_b^x}{b=t_1}$]
]]
\end{forest}
\end{center}

\noindent where $b$ is new to the branch.\medskip

\noindent The negative rule for $\invertediota$ is the negative rule of $\mathbf{MFD}^T$ with the premise $\exists !t_1$ dropped. 

Call the systems that result from $\mathbf{CPF}$ by adding \rf{HARL} and \rf{LALR} as axioms $\mathbf{FDA}^A$ and by adding the tableaux rules for $\invertediota$ above $\mathbf{FDA}^T$. 

\begin{theorem}
$\Gamma\vdash A$ in $\mathbf{FDA}^A$ iff $\Gamma\vdash A$ in $\mathbf{FDA}^T$. 
\end{theorem}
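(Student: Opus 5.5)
I follow the pattern of Theorem~\ref{MFDEquiv}. I show (a) that the axioms \rf{HARL} and \rf{LALR} are provable in $\mathbf{FDA}^T$, and (b) that the two tableaux rules for $\invertediota$ of $\mathbf{FDA}^T$ are derivable in $\mathbf{FDA}^A$. Both systems share $\mathbf{CPF}$, so these two facts give the equivalence. Any closed tableau in one system can be converted into a closed tableau in the other, either by inserting the proof of the axiom or by replacing each application of a $\invertediota$ rule with its derivation.

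For (a), \rf{LALR} is handled almost exactly as in part (a) of the proof of Theorem~\ref{MFDEquiv}.
\begin{itemize}
\item Start from its negation and apply the negative $\forall$ rule to get $\exists !a$ and $\neg(\invertediota xA=a\rightarrow\forall y(A_y^x\leftrightarrow a=y))$.
\item This yields $\invertediota xA=a$ and $\neg\forall y(\ldots)$, hence $\exists !b$ and $\neg(A_b^x\leftrightarrow a=b)$.
\item Apply the positive $\invertediota$ rule with $t_1=a$ and $t_2=b$. Both $\exists !a$ and $\exists !b$ are on the branch, and the rule is stated with $t_1=t_2$ in the required order. All four resulting branches close.
\end{itemize}
For \rf{HARL}, take $t$ arbitrary and start from $\forall y(A_y^x\leftrightarrow y=t)$ and $\neg\,\invertediota xA=t$.
\begin{itemize}
\item The negative $\invertediota$ rule, which here needs no premise $\exists !t$, gives $\exists !b$ and two branches: $A_b^x,\neg b=t$ and $\neg A_b^x, b=t$.
\item On each branch instantiate the universal hypothesis at $b$, using Garson's rule with $\exists !b$, to get $A_b^x\leftrightarrow b=t$.
\item Decomposing the biconditional closes both branches.
\end{itemize}
This is exactly where dropping the premise $\exists !t_1$ matters.

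For (b), the positive rule is derived from the instance $\invertediota xA=t_1\rightarrow\forall y(A_y^x\leftrightarrow t_1=y)$ of \rf{LALR}.
\begin{itemize}
\item Because $\exists !t_1$ is on the branch, Garson's rule extended to arbitrary terms lets us instantiate $\forall z$ at $t_1$.
\item Branching on the conditional, one branch closes against $\invertediota xA=t_1$.
\item On the other branch, instantiate $\forall y$ at $t_2$ using $\exists !t_2$, and decompose $A_{t_2}^x\leftrightarrow t_1=t_2$. This gives precisely the two branches of the rule.
\end{itemize}
The negative rule is derived from the instance of \rf{HARL} for $t_1$.
\begin{itemize}
\item Applying the $\rightarrow$ rule, the branch with $\invertediota xA=t_1$ closes against $\neg\,\invertediota xA=t_1$.
\item The other branch carries $\neg\forall y(A_y^x\leftrightarrow y=t_1)$. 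The negative $\forall$ rule gives $\exists !b$ with $b$ new and $\neg(A_b^x\leftrightarrow b=t_1)$.
\item The negated biconditional rule gives exactly the two branches $A_b^x,\neg b=t_1$ and $\neg A_b^x, b=t_1$.
\end{itemize}

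The main point needing care is bookkeeping rather than a genuine obstacle. The first bookkeeping issue is the orientation of identities: \rf{LALR} has $z=y$ while the tableau rule has $t_1=t_2$, whereas \rf{HARL} and the negative rule have $y=t$. I will check that the identity rules, which give symmetry via substitution into $t=t$, are only needed where the orientations actually disagree. The second issue is that in (b) the instantiation of $\forall z$ at $t_1$ is licensed only by the premise $\exists !t_1$. This is why the positive rule must retain both existence premises while the negative rule needs none.
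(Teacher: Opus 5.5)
Your proposal is correct and takes the same route as the paper. The paper's proof, like yours, shows (a) that \rf{HARL} and \rf{LALR} have closed $\mathbf{FDA}^T$-tableaux and (b) that the two tableaux rules for $\invertediota$ are derivable in $\mathbf{FDA}^A$, on the pattern of Theorem~\ref{MFDEquiv}; you fill in the details the paper leaves as ``similar'', and each tableau closes as you describe, with the identity orientations of \rf{LALR} ($z=y$) and \rf{HARL} ($y=t$) matching the rules exactly, so no symmetry step is ever needed.
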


\begin{proof}
(a) \rf{HARL} and \rf{LALR} are provable in $\mathbf{FDA}^T$ using the tableaux rules for $\invertediota$ in this section. (b) The tableaux rules for $\invertediota$ are derivable in $\mathbf{FDA}^A$. The proofs are similar to those for $\mathbf{MFD}$.
\end{proof}

\noindent Where does $\mathbf{FDA}$ stand in Lambert's hierarchy? The answer may be a little unexpected.\footnote{At least it was so to some. The result is due to referees of the present piece and the proof uses some of their ideas.} $\mathbf{FDA}$ is equivalent to $\mathbf{FD2}$. We will prove this axiomatically. 

\begin{theorem}\label{FD2Collapse}
$\Gamma\vdash A$ in $\mathbf{FDA}^A$ iff $\Gamma\vdash A$ in $\mathbf{FD2}^A$.
\end{theorem}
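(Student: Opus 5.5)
We prove the two directions separately, axiomatically. Both systems extend $\mathbf{CPF}$. $\mathbf{FD2}^A$ adds \rf{LA} and \rf{CO}; $\mathbf{FDA}^A$ adds \rf{HARL} and \rf{LALR}. So it suffices to derive each system's additional axioms in the other. Throughout we use that symmetry and transitivity of $=$ follow from the substitution rule and $t=t$. We also use that existence is preserved under identity: from $\exists !b$ and $b=t$ we get $\exists !t$ by substitution.

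\emph{From $\mathbf{FDA}^A$ to $\mathbf{FD2}^A$.} First we obtain \rf{LA}. Its left-to-right half is exactly \rf{LALR}, modulo symmetry of $=$ inside the biconditional. For the right-to-left half, take a parameter $a$ with $\exists !a$ and assume $\forall y(A_y^x\leftrightarrow y=a)$. The instance of \rf{HARL} with $t:=a$ yields $\invertediota xA=a$. Generalising over $a$ gives the quantified biconditional. Next we obtain \rf{CO}, using \rf{HARL} with the contradictory formula $\neg x=x$ for $A$. Assume $\neg\exists !t_1$. We show $\forall y(\neg y=y\leftrightarrow y=t_1)$: for existing $b$, $\neg b=b$ is false by reflexivity, and $b=t_1$ is also false, since otherwise $\exists !t_1$. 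So both sides of the biconditional are false. \rf{HARL} then gives $\invertediota x\neg x=x=t_1$, and likewise $\invertediota x\neg x=x=t_2$ from $\neg\exists !t_2$. Hence $t_1=t_2$.

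\emph{From $\mathbf{FD2}^A$ to $\mathbf{FDA}^A$.} \rf{LALR} is half of \rf{LA}. The real work is \rf{HARL}. Assume $\forall y(A_y^x\leftrightarrow y=t)$ and argue by classical cases on $\exists !t$.
\begin{itemize}
\item If $\exists !t$, instantiating \rf{LA} at $t$ gives $\invertediota xA=t$.
\item If $\neg\exists !t$, we show $\neg\exists !\invertediota xA$; then \rf{CO} gives $\invertediota xA=t$.
\end{itemize}
For the second case, suppose $\exists !\invertediota xA$. Instantiate \rf{LA} at $z:=\invertediota xA$; with $\invertediota xA=\invertediota xA$ this gives $\forall y(A_y^x\leftrightarrow y=\invertediota xA)$. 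Instantiating at the existing term $\invertediota xA$ gives $A_{\invertediota xA}^x$. The hypothesis, instantiated at the same existing term, then gives $\invertediota xA=t$, and so $\exists !t$, which is a contradiction.

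This non-existent case of \rf{HARL} is the step I expect to need the most care. It is the only place where \rf{CO} is used, and the instantiations at $\invertediota xA$ must be licensed by the existence assumption under Garson's $\forall$ rule. If a tableau presentation is preferred, the same argument can be rendered with the $\mathbf{FD2}^T$ and $\mathbf{FDA}^T$ rules via the earlier equivalence theorems.
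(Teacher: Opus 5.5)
Your proof is correct. The direction from $\mathbf{FDA}^A$ to $\mathbf{FD2}^A$ is exactly the paper's Part II. You get ($LA$) from its two halves, and ($CO$) from the instance of ($HA^{rl}$) with $\neg x=x$.

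The other direction takes a genuinely different route.

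\textbf{The paper's route.} It derives ($HA^{rl}$) from Lambert's single axiom ($FD2$), $\invertediota xA=t\leftrightarrow \forall x(x=t\leftrightarrow\forall y(A_y^x\leftrightarrow y=x))$. It shows directly that $\forall y(A_y^x\leftrightarrow y=t),\exists !a\vdash a=t\leftrightarrow\forall y(A_y^x\leftrightarrow y=a)$ for fresh $a$, then generalises and detaches. There is no case split on $\exists !t$, and ($CO$) is never used.

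\textbf{Your route.} You work from ($LA$) and ($CO$), which is how the paper itself defines $\mathbf{FD2}^A$. You split on $\exists !t$, and in the non-existent case you show by reductio that $\neg\exists !\invertediota xA$ before applying ($CO$).

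\textbf{What each buys.} The paper's argument is uniform and case-free. However, because $\mathbf{FD2}^A$ was defined as $\mathbf{CPF}$ plus ($LA$) and ($CO$), its Part I leans on the cited van Fraassen--Lambert result that ($FD2$) and ($LA$)+($CO$) axiomatise the same theory. Your argument is self-contained relative to the paper's own definition of $\mathbf{FD2}^A$, and it also gets ($LA^{lr}$) for free. It makes explicit that ($CO$) is needed only when $t$ fails to exist. In that case ($LA$) together with the hypothesis $\forall y(A_y^x\leftrightarrow y=t)$ already forces $\invertediota xA$ out of the inner domain, and ($CO$) merely identifies the two non-existents.
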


\begin{proof}
Part I. \rf{LALR} and \rf{HARL} are provable from \rf{FD2}. (a) Instantiating $t$ with $\invertediota xA$ in \rf{FD2} gives \rf{LA} from $\invertediota xA=\invertediota xA$, and \rf{LALR} follows from \rf{LA}. 

(b) To prove \rf{HARL} from \rf{FD2}, we prove the right hand side of \rf{FD2} from the left hand side of \rf{HARL} and detach. By Leibniz' Law (1) $\forall y(A_y^x\leftrightarrow y=t), a=t\vdash \forall y(A_y^x\leftrightarrow y=a)$. By free quantification theory, $\forall y(A_y^x\leftrightarrow y=a), \exists !a\vdash A_a^x\leftrightarrow a=a$, hence (a) $\forall y(A_y^x\leftrightarrow y=a), \exists !a\vdash A_a^x$, and also (b) $\forall y(A_y^x\leftrightarrow y=t), \exists !a\vdash A_a^x\leftrightarrow a=t$. From (a) and (b), (2) $\forall y(A_y^x\leftrightarrow y=t), \forall y(A_y^x\leftrightarrow y=a), \exists !a\vdash a=t$. From (1) and (2) $\forall y(A_y^x\leftrightarrow y=t), \exists !a\vdash a=t\leftrightarrow \forall y(A_y^x\leftrightarrow y=a)$, hence by free quantification theory $\forall y(A_y^x\leftrightarrow y=t)\vdash \forall x(x=t\leftrightarrow \forall y(A_y^x\leftrightarrow y=x))$. Hence by \rf{FD2}, $\forall y(A_y^x\leftrightarrow y=t)\vdash\invertediota xA=t$. 

Part II. \rf{FD2} is provable from \rf{LALR} and \rf{HARL}. \rf{LA} is immediate from \rf{LALR} and \rf{HARL} by principles of free logic. It suffices to prove \rf{CO}, as $\mathbf{FD2}$ is equivalently axiomatised by \rf{LA} and \rf{CO}. So assume $\neg \exists !t_1\land \neg \exists !t_2$. Consider the instance $(\ast)$ $\forall y(\neg \ y=y \leftrightarrow y=t_1)\rightarrow \invertediota x (\neg \ x=x)=t_1$ of \rf{HARL}. By Leibniz' Law $\neg \exists !t_1, a=t_1\vdash \neg \exists !a$, hence $\exists !a, \neg \exists !t_1, a=t_1\vdash \bot$, and so  $\exists !a, \neg \exists !t_1, a=t_1\vdash \neg \ a=a$. Conversely, $\neg \ a=a\vdash \bot$, hence $\neg \ a=a\vdash a=t_1$. So by the logic of $\leftrightarrow$, $\exists !a, \neg \exists !t_1\vdash \neg \ a=a \leftrightarrow a=t_1$, so by free quantification theory $\neg \exists !t_1\vdash \forall y(\neg y=y\leftrightarrow y=t_1)$. Thus by $(\ast)$ $\invertediota x (\neg \ x=x)=t_1$. Similarly $\invertediota x (\neg \ x=x)=t_2$, for any $t_2$. Hence $t_1=t_2$.
\end{proof}

\noindent In \cite{kurbisiotaposfreelogic}, a normalisation theorem is proved for $\mathbf{FDA}$. It is also claimed that $\mathbf{FD2}$ is not amenable to proof-theoretic semantics, because the introduction rule for $\invertediota$ given there does not at all compare to paradigm cases of harmonious rules. The claim has now been cast into doubt, because $\mathbf{FD2}$ is equivalent to $\mathbf{FDA}$. There is, however, a problem lurking here on which more in the conclusion. 

On a positive note, the tableaux rules given in this section for $\mathbf{FDA}$ are slightly simpler than those given earlier for $\mathbf{FD2}$, so a minor advance in the subject has been made.\footnote{The natural deduction rules, too, are simple and straightforward, as opposed to those for $\mathbf{FD2}$ in \cite{kurbisiotaposfreelogic}.}

\subsection{FDH} 
A penchant for symmetry may lead one to consider a system that arises by modifying the other direction of \rf{LA} just as \rf{HARL} already modifies one of them. The result is a biconditional: 

\lbp{HA}{$HA$}{$\invertediota xA=t\leftrightarrow \forall y(A_y^x\leftrightarrow y=t)$}

\noindent \rf{C} follows from the instance with $x=t$ for $A$ and general free logic, so the desideratum of deriving it from more fundamental principles is fulfilled. The corresponding tableaux rules are: 

\begin{center}
\begin{forest}
[$\stackunder{\exists !t_2}{\invertediota x A=t_1}$
[$\stackunder{A_{t_2}^x}{t_1=t_2}$]
[$\stackunder{\neg A_{t_2}^x}{\neg t_1=t_2}$]
]
\end{forest}\qquad\qquad
\begin{forest}
[${\neg \ \invertediota xA=t_1}$
[$\exists !b$
[$\stackunder{A_b^x}{\neg b=t_1}$]
[$\stackunder{\neg A_b^x}{b=t_1}$]
]]
\end{forest}
\end{center}

\noindent where $b$ is new to the branch.\medskip

\noindent Note the difference to $\mathbf{MFD}$: the existence of $t_1$ is not required. And that's all the difference. 

In \cite{kurbisiotaposfreelogic}, \rf{HA} was named after Hintikka, because of its resemblance to an axiom put forward in his first publication on our subject \cite{hintikkatowardsdefdesc}:

\lbp{HO}{$HO$}{$\invertediota xA=t\leftrightarrow (A_t^x\land \forall y(A_y^x\rightarrow y=t))$}

\noindent Lambert pointed out that \rf{HO} leads to inconsistency, as it implies that $A(\invertediota xA)$ is unconditionally assertible \cite{lambertnotesEIII}: let $t$ be $\invertediota xA$, detach with $\invertediota xA=\invertediota xA$, and simplify; but when $A$ is contradiction, $A(\invertediota xA)$, too, is a contradiction. 

In classical logic, there are two formulas that customarily are used as the Russellian analysis of `The $F$ is $G$': $\exists x(Fx\land \forall y(Fy\rightarrow x=y)\land Gx)$ and $\exists x(\forall y(Fy\leftrightarrow x=y)\land Gx)$. In classical logic, $(A_t^x\land \forall y(A_y^x\rightarrow x=t))$ and $\forall y(A_y^x\leftrightarrow y=t)$ are equivalent. In free logic, they are not. The latter is weaker, as $t$ is in the scope of a quantifier: to deduce $A_t^x$ requires the premises $\exists !t$. \rf{HA} may be thought to be a minimal adjustment of \rf{HO} that ensures consistency. I now, however, prefer to call it the hypothetical axiom rather than settling Hintikka with it. The hypothetical theory of definite descriptions axiomatised by it is indeed consistent, as follows from the normalisation theorem proved for what is there called `Hintikka's System' in \cite{kurbisiotaposfreelogic}, but it has another consequence that means that it is not a viable theory of definite descriptions. 

Call the systems that result from $\mathbf{CPF}$ by adding \rf{HA} as an axiom $\mathbf{FDH}^A$ and by adding the tableaux rules for $\invertediota$ above $\mathbf{FDH}^T$. 

\begin{theorem}\label{FDHEquiv}
$\Gamma\vdash A$ in $\mathbf{FDH}^A$ iff $\Gamma\vdash A$ in $\mathbf{FDH}^T$. 
\end{theorem}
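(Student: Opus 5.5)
The proof follows the pattern of Theorem \ref{MFDEquiv}. Since both systems share $\mathbf{CPF}$, it suffices to show two things: (a) \rf{HA} is provable in $\mathbf{FDH}^T$, and (b) the two tableaux rules for $\invertediota$ of $\mathbf{FDH}^T$ are derivable in $\mathbf{FDH}^A$. Together with the standard facts that axioms may be added at any node of a tableau and that derived rules may be used freely, this gives the equivalence of $\vdash$ in both directions.

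For (a), I would start a tableau with $\neg(\invertediota xA=t\leftrightarrow \forall y(A_y^x\leftrightarrow y=t))$ and split by the negated biconditional rule.

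On the branch with $\invertediota xA=t$ and $\neg\forall y(A_y^x\leftrightarrow y=t)$, the negative $\forall$ rule gives $\exists !b$ and $\neg(A_b^x\leftrightarrow b=t)$. That formula branches into $A_b^x,\neg b=t$ and $\neg A_b^x, b=t$. In each sub-branch I apply the positive $\invertediota$ rule of $\mathbf{FDH}^T$ with $t_1:=t$, $t_2:=b$, using $\exists !b$. This yields the alternatives $A_b^x, t=b$ and $\neg A_b^x,\neg t=b$. Every resulting branch closes, though some closures need $t=b$ converted to $b=t$ or conversely. That conversion is routine via the identity rules: from $t=b$ and $t=t$, Leibniz' Law gives $b=t$.

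On the branch with $\neg\invertediota xA=t$ and $\forall y(A_y^x\leftrightarrow y=t)$, the negative $\invertediota$ rule introduces $\exists !b$ with $A_b^x,\neg b=t$ or $\neg A_b^x, b=t$. Garson's $\forall$ rule, instantiated with $b$ (licensed by $\exists !b$), gives $A_b^x\leftrightarrow b=t$, whose two branches close against each alternative. This is exactly the structure of the $\mathbf{MFD}$ proof minus the outer quantifier over $z$, and it is in fact simpler, since no $\exists !a$ step is needed.

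For (b), I derive each rule by adding the instance of \rf{HA} for the relevant $t_1$ and splitting the biconditional, as in part (b) of Theorem \ref{MFDEquiv}.

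For the positive rule, with premises $\invertediota xA=t_1$ and $\exists !t_2$, the branch with $\neg\invertediota xA=t_1$ closes. On the other branch, $\forall y(A_y^x\leftrightarrow y=t_1)$ is instantiated with $t_2$ via $\exists !t_2$, and $A_{t_2}^x\leftrightarrow t_2=t_1$ splits into the required two branches, modulo symmetry of identity.

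For the negative rule, the branch containing $\invertediota xA=t_1$ closes. The other branch has $\neg\forall y(\ldots)$, which yields $\exists !b$ with $b$ new and $\neg(A_b^x\leftrightarrow b=t_1)$, splitting into the two required branches. Crucially, no $\exists !t_1$ is needed anywhere here, since \rf{HA} is stated for arbitrary terms rather than under a quantifier. This is the point where the proof differs from that for $\mathbf{MFD}$.

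I expect no real obstacle. The only point needing care is the orientation of identities ($t_1=t_2$ in $\mathbf{FDH}^T$ versus $y=t$ in \rf{HA}), which is handled by the derived symmetry of $=$ from the substitution rule and $t=t$.
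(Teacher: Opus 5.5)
Your proposal is correct and follows the paper's own route: (a) prove \rf{HA} in $\mathbf{FDH}^T$, and (b) derive both $\invertediota$ rules in $\mathbf{FDH}^A$ as for $\mathbf{MFD}$, with more detail than the paper gives. One repair is needed: on the right branch of the positive rule you obtain $\neg\, t_2=t_1$ where the rule requires $\neg\, t_1=t_2$, and symmetry of a \emph{negated} identity does not follow from substitution and $t=t$ alone, since substitution needs a positive identity between $t_1$ and $t_2$ and there is none on that branch; supply it by adding the instance of \rf{HA} for $t_2$, whose positive half closes (instantiate $\forall y(A_y^x\leftrightarrow y=t_2)$ at $t_2$ against $\neg A_{t_2}^x$ and $t_2=t_2$), while on its negative half Leibniz' Law applied to $\neg\,\invertediota xA=t_2$ and $\invertediota xA=t_1$ gives $\neg\, t_1=t_2$.
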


\begin{proof}
(a) \rf{HA} is provable in $\mathbf{FDH}^T$ using the tableaux rules for $\invertediota$. (b) The tableaux rules for $\invertediota$ are derivable in $\mathbf{FDH}^A$. The proofs are similar to those for $\mathbf{MFD}$. 
\end{proof}

\noindent $\mathbf{FDH}$ has a motivation of sorts from symmetry and the desire to prove \rf{C} from more fundamental principles. Arguably, it is also motivated by proof-theoretic semantics. $(\invertediota E_2)$ requires a premise, $\exists !t$, to which there corresponds no discharged assumption above the second premise of $(\invertediota I')$. This suggests dropping this premise from $(\invertediota E_2)$: 

\begin{prooftree}
\AxiomC{$\invertediota xF=t$}
\AxiomC{$F_x^u$}
\AxiomC{$\exists !u$}
\LeftLabel{$(\invertediota E_2') \ $}
\TrinaryInfC{$u=t$}
\end{prooftree}

\noindent Adding $(\invertediota I')$, $(\invertediota E_1)$ and $(\invertediota E_2')$ to $\mathbf{CPF}$ in natural deduction formalises $\mathbf{FDH}$. As a normalisation theorem has been proved for this system, it satisfies a necessary requirement of proof-theoretic semantics. $\mathbf{FDH}$, however, cannot count as a satis factory theory of definite descriptions, or anything else for that matter, because of the following:\footnote{This result is also due to referees for this paper.}

\begin{theorem}\label{onlyonething}
In $\mathbf{FDH}$ $\vdash \forall x \forall y \ x=y$.
\end{theorem}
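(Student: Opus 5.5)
The plan is to exploit the feature that distinguishes \rf{HA} from \rf{LA}: in \rf{HA} the term $t$ is schematic and need not exist, so we may instantiate it with the description $\invertediota xA$ itself. Doing so in the left-to-right direction of \rf{HA} and detaching with the identity axiom $\invertediota xA=\invertediota xA$ yields, for \emph{every} formula $A$, the unconditional theorem $\forall y(A_y^x\leftrightarrow y=\invertediota xA)$. This is the analogue in $\mathbf{FDH}$ of Lambert's observation about \rf{HO}: there the instance gives $A(\invertediota xA)$ outright. Here the quantifier blocks that conclusion, but it still says that $\invertediota xA$ is identical to every existing object that satisfies $A$.

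Next we choose a formula that every existing object satisfies, namely $A := x=x$. The general fact just obtained becomes $\vdash \forall y(y=y\leftrightarrow y=\invertediota x(x=x))$. Now take parameters $a$ and $b$ and assume $\exists !a$ and $\exists !b$. Instantiating the universal quantifier at $a$, which is licensed by $\exists !a$ via Garson's positive $\forall$ rule, gives $a=a\leftrightarrow a=\invertediota x(x=x)$. With the law of identity this yields $a=\invertediota x(x=x)$, and in the same way $b=\invertediota x(x=x)$. Symmetry and transitivity of identity, both available from Leibniz' Law and $t=t$ in $\mathbf{CPF}$, then give $a=b$. So we have $\exists !a,\exists !b\vdash a=b$.

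Finally, free quantification theory generalises this twice, since $a$ and $b$ occur in no other assumptions, to give $\vdash\forall x\forall y\, x=y$. Alternatively, the whole argument can be run as a closed tableau in $\mathbf{FDH}^T$, which is legitimate by Theorem \ref{FDHEquiv}. Start from $\neg\forall x\forall y\,x=y$ and apply the negative $\forall$ rule twice, introducing $\exists !a$, $\exists !b$ and $\neg\, a=b$. Then use the positive $\invertediota$ rule of $\mathbf{FDH}^T$ with $t_1:=\invertediota x(x=x)$ and with $t_2$ set first to $a$ and then to $b$. Its premise $\invertediota x(x=x)=\invertediota x(x=x)$ is supplied by the identity rule. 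In each application the branch containing $\neg\, a=a$ (respectively $\neg\, b=b$) closes at once, and the surviving branch carries $\invertediota x(x=x)=a$ and $\invertediota x(x=x)=b$. The substitution rule then produces $a=b$, which closes against $\neg\, a=b$.

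The only step that needs real thought is the first: realising that \rf{HA} may legitimately be instantiated with $t:=\invertediota xA$, because $\mathbf{FDH}$, unlike $\mathbf{MFD}$, places no existence requirement on $t_1$. Once that is seen, the rest is routine free logic. The one point to check carefully is that every instantiation of a universal quantifier is made at an existing parameter, and this holds throughout because we only ever instantiate at $a$ and $b$.
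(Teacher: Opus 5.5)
Your proof is correct and is the paper's argument: instantiate the left-to-right half of $HA$ with $A := x=x$ and $t := \invertediota x(x=x)$, detach with $\invertediota x(x=x)=\invertediota x(x=x)$, instantiate at existing parameters $a$ and $b$ to get $a=\invertediota x(x=x)$ and $b=\invertediota x(x=x)$, conclude $a=b$, and generalise. Your closed tableau in $\mathbf{FDH}^T$ is also sound; it is the same argument run through the positive $\invertediota$ rule, with one substitution step giving $a=b$.
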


\begin{proof}
Consider the instance of the left to right half of \rf{HA} in which $A$ is $x=x$ and $t$ is $\invertediota x(x=x)$: $\invertediota x(x=x)=\invertediota x(x=x)\rightarrow \forall y(y=y\leftrightarrow y=\invertediota x(x=x))$. As $\invertediota x(x=x)=\invertediota x(x=x)$, $\forall y(y=y\leftrightarrow y=\invertediota x (x=x))$. So for any parameter $a$, $\exists !a\vdash a=a\leftrightarrow a=\invertediota x(x=x)$, hence $\exists !a\vdash a=\invertediota x(x=x)$, and for any different $b$, $\exists !b\vdash b=\invertediota x(x=x)$. Hence $\exists !a, \exists !b\vdash a=b$, and so $\vdash \forall x\forall y \ x=y$ by free quantification theory. 
\end{proof}

\noindent $\mathbf{FDH}$ permits at most one thing to exist. It is not a reasonable theory of definite descriptions, nor of anything else one might think is formalised by $\invertediota$, and this despite of its proof-theoretic credentials. This requires some discussion. 

Appealing only the to left to right half of \rf{HA}, Theorem \ref{onlyonething} also excludes another theory one may be inclined to consider, axiomatised by the right to left half of \rf{LA} and the left to right half of \rf{HA}. Fortunately it lacks proof-theoretic motivation.

\section{Concluding Philosophical Remarks}
The proof of Part II of Theorem \ref{FD2Collapse} requires only resources available in $\mathbf{FDH}$, so $\mathbf{FDH}$ includes $\mathbf{FD2}$, and we have the following hierarchy: $\mathbf{MFD}\subset\mathbf{FD1}\subset\mathbf{FD2}=\mathbf{FDA}\subset\mathbf{FDH}$. The last inclusion is proper, because $\mathbf{FD2}$ permits more than one object to exist.

A normalisation theorem can be proved for $\mathbf{FDA}$ and $\mathbf{FDH}$ in their natural deduction formulations. Both theories therefore satisfy a necessary condition of proof-theoretic semantics. $\mathbf{FDA}$ aka $\mathbf{FD2}$ has its proponents. $\mathbf{FDH}$ is not a viable theory of anything, because it restricts the domain of quantification to at most one thing. This suggests that $(\invertediota E_2')$ of $\mathbf{FDH}$ is too strong relative to $(\invertediota I')$: it permits the deduction of more than is justified by the latter. Yet this strength does not appear to stem from any kind of misbalance. The rules for $\invertediota$ of $\mathbf{FDH}$ are as harmonious as any. The premises of its elimination rules correspond exactly to the discharged assumptions of its introduction rule. Grounds and consequences should be balanced. $\mathbf{FDA}$, on the other hand, requires an additional premise in $(\invertediota E_2)$: it requires \emph{too much} for its application, and so fewer conclusions than are justified by the introduction rule can be drawn from it. A misbalance, then, lies in the rules for $\mathbf{FDA}$. Thus it is $\mathbf{FDH}$, not $\mathbf{FDA}$, that appears to satisfy also the sufficient conditions. Proof-theoretic semantics usually demands that sameness of introduction rules means sameness of elimination rules. This is not given here. But the problem is that it is the rules of $\mathbf{FDH}$ that are harmonious, and the only option would be to discard $\mathbf{FD2}$ on these grounds. 

$\mathbf{FDH}$ is not conservative over $\mathbf{CPF}$. I favour a view according to which conservativeness is a consequence of harmony (cf. Dummett's conjecture \cite[290]{dummettLBM}).\footnote{There may be exceptions. Prawitz \cite{prawitzreviewLBM} gives the truth predicate added to arithmetic as a counterexample, although this may have to do with the numbers. Read does not think that harmony implies conservativeness \cite{readharmony}. See also \cite{weisskurbislogic}.} In a nutshell, the reason is that if harmony obtains, we only take out of a sentence by an elimination rule what has been put into it according to its introduction rules, and so it is plausible that harmony implies conservativeness, as the latter would appear to require to take out more than what has been put in. The situation is thus rather awkward. Once the framework of positive free logic is accepted, any rules whatsoever that satisfy its criteria should be acceptable, and, indeed, must be accepted as logical. But we can't accept $\mathbf{FDH}$. 

Suppose we permit the discharge of $\exists !t$ above the third premises of $(\invertediota I)$ to restore harmony with $(\invertediota E_2)$. The resulting theory of $\invertediota$, call it $\mathbf{FDI}$, is axiomatised by the left to right half of \rf{LA} and 

\lbp{FDI}{$FDI$}{$(\forall y(y=t\rightarrow A_y^x)\land (\exists !t\rightarrow \forall y(A_y^x\rightarrow y=t))\rightarrow \invertediota xA=t$}

\noindent Replacing $A$ by $x=t$ proves \rf{C}. If $\neg \exists !t$, the antecedent is trivially fulfilled, so $\vdash \neg \exists !t\rightarrow \invertediota xA=t$, for any $t$ and $A$.  From these by $\invertediota x(x=t_2)=t_2, \invertediota x(x=t_2)=t_1\vdash t_1=t_2$, $\neg \exists !t_1\vdash t_1=t_2$, hence $\neg \exists !t_1\vdash \neg\exists !t_2$, for any $t_1$ and $t_2$. Replacing and contraposing, $\exists !a\vdash \exists !\invertediota x(x\not =x)$. So by the left to right half of \rf{LA} $\invertediota x(x\not=x)=\invertediota x(x\not=x)\rightarrow \forall y(y\not=y\leftrightarrow y=\invertediota x(x\not=x))$. Detaching and instantiating once more with $\invertediota x(x\not =x)$ gives a contradiction. Hence $\vdash \neg\exists !a$.\footnote{This is another result due to a referee.} This is no good either. 

We have not found a viable theory of definite descriptions in positive free logic in which $\invertediota$ is governed by harmonious rules. A number of negative conclusions could be drawn from this: so much the worse for positive free logic; so much the worse for $\invertediota$; so much the worse for formalising definite descriptions as singular terms. In the absence of a thorough philosophical inquiry any such conclusion would be rash. We are faced with a philosophical problem to the investigation of which I invite any interested parties to contribute. 

\medskip

\noindent \textbf{Acknowledgements.}

\noindent A big thank you goes to the referees for this paper: they did an exceptional job in pointing out errors in the initial versions of this paper and providing constructive criticism. This is what refereeing is there for. 

The research in this paper was funded by the European Union (ERC, ExtenDD, project number: 101054714). Views and opinions expressed are however those of the author(s) only and do not necessarily reflect those of the European Union or the European Research Council. Neither the European Union nor the granting authority can be held responsible for them.

\medskip

\bibliographystyle{eptcs}
\bibliography{KurbisTableauxDD}
\end{document}